\newtheorem{theorem}{Theorem}
\renewcommand\section{\@startsection {section}{1}{\z@}%
                                   {-3.5ex \@plus -1ex \@minus -.2ex}%
                                   {2.3ex \@plus.2ex}%
                                   {\normalfont\large\bfseries}}
\begin{document}
\doublespace
\title[]{Jackknife empirical likelihood based inference for Probability weighted moments}
\author[]{D\lowercase{eepesh} B\lowercase{hati}$^*$,   S\lowercase{udheesh} K K\lowercase{attumannil}$^{**,\dag}$ \lowercase{and} S\lowercase{reelakshmi}  N$^{***}$  \\
$^{*}$D\lowercase{epartment} \lowercase{of} S\lowercase{tatistics,} C\lowercase{entral} U\lowercase{niversity of } R\lowercase{ajasthan},  I\lowercase{ndia.}\\$^{**}$I\lowercase{ndian} S\lowercase{tatistical} I\lowercase{nstitute},
 C\lowercase{hennai}, I\lowercase{ndia.}\\
$^{***}$ I\lowercase{ndian} I\lowercase{nstitute of } T\lowercase{echnology},
C\lowercase{hennai}, I\lowercase{ndia.}}

\thanks{ {$^{\dag}$} {Corresponding E-mail: \tt skkattu@isichennai.res.in. } D\lowercase{eepesh} B\lowercase{hati} would like to thanks Indian Statistical Institute, Chennai for the support during his visit to
ISI Chennai.}
\begin{abstract}
In the present article,  we  discuss jackknife empirical likelihood (JEL) and adjusted jackknife empirical likelihood (AJEL) based inference for finding confidence intervals for probability weighted moment (PWM). We obtain the asymptotic distribution of the JEL ratio and AJEL ratio statistics. We compare the performance of the proposed confidence intervals with recently developed methods in terms of coverage probability and average length. We also develop JEL and AJEL based test for PWM and study it properties. Finally  we illustrate our method using rainfall data of Indian states.\\
 \noindent {\sc Keywords:} Empirical Likelihood; JEL; Probability weighted moment; $U$-statistics.
\end{abstract}
\maketitle

\section{Introduction}
\par Probability weighted moments (PWM) generalize the concept of moments of a probability distribution.It is generally used to estimate the parameters of extreme distributions of natural phenomena. Particularly, in the fields of Hydrology and Climatology, researchers use PWM for the estimation of parameters of the distributions related to water discharges and maxima of temperatures. Greenwood et al. (1979) proposed the concept of probability weighted moments  to  estimate the parameters involved in the models of extremes of natural phenomena.  The PWM of a random variable $X$  with distribution function  $F(.)$ is defined as
\begin{equation*}
\mathcal{M}_{p,r,s}=\mathbb{E}\left \lbrace X^pF^r(X)(1-F(X))^s\right\rbrace,
\end{equation*}
where $p$, $r$ and $s$ are any real numbers. Hosking et al. (1985) studied PWM of the form given by
\begin{equation}\label{pwm}
\beta_r=\mathbb{E}\left \lbrace XF^r(X)\right\rbrace,
\end{equation}
to characterize various distributional properties such as assessment of scale parameter, skewness of the distribution and L-moments. In this article we discuss the empirical likelihood inference of $\beta_r$.

 \par Given a random sample $X_1,X_2,\cdots, X_n$ of size $n$ from  $F$, let $X_{(i)},i=1,2,\ldots,n$ be the $i$-$th$ order statistic. David and Nagaraja (2003) proposed an estimator (D-N estimator) for $\beta_r$ by replacing $F$ in (\ref{pwm}) with its empirical version, $\hat{F}_n(x)=\frac{1}{n}\sum\limits_{i=1}^{n}\mathbb{I}_{\lbrace X_i\le x\rbrace}$, where $\mathbb{I}$ denotes the indicator function.  Their estimator is given by
\begin{equation}\label{1}
\overline{\beta}_r=\frac{1}{n}\sum\limits_{i=1}^{n}\left(\frac{i}{n}\right)^r X_{(i)}.
\end{equation}
To develop an empirical likelihood inference for $\beta_r$,  Vexler et al. (2017) proposed an estimator (Vexler's estimator) given by
\begin{equation}\label{2}
\tilde{\beta}_r= \frac{1}{r+1}\sum\limits_{i=1}^{n} X_{(i)} \left\lbrace  \left(\frac{i}{n} \right)^{r+1} -\left(\frac{i-1}{n} \right)^{r+1}  \right\rbrace
\end{equation}
and showed that asymptotic behaviour of both $\tilde{\beta}_r$ and $\overline{\beta}_r$ are same.


 Empirical likelihood is a non-parametric inference tool which make use of likelihood principle. This inference procedure is  firstly used by Thomas and Grunkemeier (1975) to obtain the confidence interval for survival probability when data contain censored observations. Pioneering papers by Owen (1988, 1990) for finding the confidence interval of regression parameters take the empirical likelihood method into a general methodology and have wide applications in many statistical areas.  This approach enjoys the wide acceptance among the researchers as it combines the effectiveness of the likelihood approach with the reliability of non-parametric procedure.

 Empirical  likelihood finds applications in regression, survival analysis and inference for income inequality measures [Shi and Lau (2000),  Whang (2006), Qin et al. (2010), Peng (2011), Qin et al. (2013), Zhou (2015), Wang et al.(2016), Wang and Zhao (2016)]. Recently, Vexler et al. (2017) proposed an empirical likelihood based inference for PWM and showed  that the limiting distribution of log empirical likelihood ratio statistic is $\chi^{2}$ distribution with one degree of freedom and obtained confidence intervals and likelihood ratio tests for PWM.

\par   In empirical likelihood approach,  we need to maximize the non-parametric likelihood function subject to some constraints. When the constraints are linear, the maximization of the likelihood is not difficult. However, when the constraints are based on nonlinear statistics such as $U$-statistics with higher degree $(\ge 2)$ kernel the implementation of empirical likelihood becomes challenging. To overcome this difficulty, Jing et al. (2009) introduced the jackknife empirical likelihood (JEL) inference, which combines two of the popular non-parametric approaches namely, the jackknife and the empirical likelihood approach. Chen et al. (2008) proposed the concept of adjusted empirical likelihood which preserves the asymptotic properties of empirical likelihood. Even though it is an adjustment given to empirical likelihood, adjusted empirical likelihood procedure yields better coverage probability than bootstrap calibration and Bartlett correction methods.  Recently Zhao et al. (2015) introduced adjusted jackknife empirical likelihood (AJEL) inference so that restriction on parameter values on the convex hull of estimating equation is relaxed.


\par Motivated by these recent works, in this article, we develop JEL and AJEL based inference to construct confidence intervals and likelihood ratio tests for PWM. The present article is structured as follows. In Section 2, we derive the jackknife empirical log likelihood ratio for $\beta_r$ and obtained its limiting distribution. Using this result, we construct jackknife empirical likelihood   based confidence interval and likelihood ratio test for $\beta_r$.  We further derive AJEL based confidence interval and likelihood ratio test for $\beta_r$. In Section 3, a comparison of the proposed methods with empirical likelihood method are given using Monte Carlo simulation. Finally an illustration of our methods using a real data is given in Section 4. Major findings of the study are given in Section 5.

\section{Jackknife empirical likelihood inference for PWM}
In this section, first we discuss JEL based inference for confidence interval and likelihood ratio test for $\beta_r$. Later we discuss same problems using AJEL based inference. The implementation of these methods require jackknife pseudo values obtained using an estimator of $\beta_r$. For this purpose, we introduce an estimator of $\beta_r$ using theory of $U$-statistics.

To obtain a $U$-statistic based estimator for $\beta_r$ we rewrite (\ref{pwm}) as
\begin{align*}
\beta_{r}=& \frac{1}{r+1}\int\limits_{-\infty}^{\infty} (r+1)xF^r(x) dF(x) \\
=& \frac{1}{r+1}\cdotp \mathbb{E}(\max(X_1,X_2,\cdots,X_{r+1})),
\end{align*}provided $r$ is a positive integer.
Therefore an unbiased  estimator of $\beta_r$  is given by
\begin{equation}\label{est}
\widehat{\beta}_{r}=\frac{1}{r+1} \frac{1}{C_{m,n}} \sum\limits_{C_{m,n}}h{(X_{i_1},X_{i_2},\cdots,X_{i_{r+1}})},
\end{equation}
where $h{(X_{i_1},X_{i_2},\cdots,X_{i_{r+1}})}=\max(X_{i_1},X_{i_2},\cdots,X_{i_{r+1}})$ and the summations is over the set  $C_{m,n}$ of all combinations of $(r+1)$ distinct elements $\lbrace i_1,i_2,\cdots,i_{r+1}\rbrace$ chosen from $\lbrace 1,2,\cdots,n\rbrace$. Clearly $\widehat{\beta}_{r}$ is a consistent estimator of ${\beta}_{r}$ (Lehmann, 1951). Use of consistent and unbiased estimator to construct an empirical likelihood based confidence interval give better coverage  probability and average length. However, $\beta_r$  has a $U$-statistics based estimator with kernel of degree greater than one which results in non-linear constraints in the optimization problem associated with empirical likelihood. This makes the implementation of empirical likelihood theory very difficult. This leads us to construct JEL based confidence interval and test for $\beta_r$.

\par Next we discuss how to derive the jackknife empirical likelihood ratio for $\beta_r$.  The jackknife pseudo-values for $\beta_r$ is given by
\begin{equation}\label{jsv}
\widehat{V}_{k}= n \widehat{\beta}_{r}-(n-1)\widehat{\beta}_{r,k}; \qquad k=1,2,\cdots,n,
\end{equation}
where $\widehat{\beta}_{r,k}$ is the estimator of $\beta_r$ obtained from (\ref{est}) by using $(n-1)$ observations $X_1,X_2,...,X_{k-1},X_{k+1},...,X_n$. The jackknife estimator $\widehat{\beta}_{r,jack}$ of $\beta_r$ is the average of the jackknife pseudo-values, that is
$$\widehat{\beta}_{r,jack}=\frac{1}{n}\sum\limits_{k=1}^{n}\widehat{V}_{k}.$$
As the pseudo-values are constructed using a $U$-statistic, the two  estimators $\widehat{\beta}_{r}$ and $\widehat{\beta}_{r,jack}$ coincide. We use the jackknife pseudo-values $\widehat{V}_{k}$ defined in equation (\ref{jsv}) to construct JEL  for $\beta_r$.
The jackknife empirical likelihood of $\beta_{r}$ is defined as
\begin{equation}\label{6}
 J(\beta_{r})=\sup_{\bf p} \left(\prod_{k=1}^{n}{p_k};\,\, p_k \ge 0;\, \, \sum_{k=1}^{n}{p_k}=1;\,\,\sum_{k=1}^{n}{p_k (\widehat{V}_k}-\beta_r)=0\right),
\end{equation}
 where ${\bf p}=(p_1,p_2,...,p_n)$ is a probability vector. The maximum of (\ref{6}) occurs at
\begin{equation*}
  p_k=\frac{1}{n}\left(1+\lambda(\widehat{V}_{k}-\beta_r)\right)^{-1}, k=1,2,...,n,
\end{equation*}
where $\lambda$ is the solution of
\begin{equation}\label{eq7}
  \frac{1}{n}\sum_{k=1}^{n}{\frac{\widehat{V}_{k}-\beta_r}{1+\lambda (\widehat{V}_{k}-\beta_r)}}=0,
\end{equation}provided
\begin{equation*}
  \min_{{1\le k\le n}}\widehat{V}_{k}<\widehat \beta_r<  \max_{1\le k\le n}\widehat{V}_{k}.
\end{equation*}
Also note that, $\prod\limits_{k=1}^{n}p_i$, subject to $\sum\limits_{i=1}^{n}p_i=1$, attains its maximum $n^{-n}$ at $p_i=n^{-1}$. Hence, the jackknife empirical log-likelihood ratio  for  $\beta_r$ is given by
\begin{equation}\label{jelrat}
  l(\beta_r)=-\sum_{i=1}^{n}\log\left[1+\lambda (\hat{V}_{k}-\beta_r)\right].
\end{equation}

Next theorem explains the limiting distribution of $l(\beta_r)$ which can be used to construct the JEL based confidence interval and test for $\beta_r$.

\begin{theorem}
  Suppose that $E\left(h^2(X_{1}, X_{2},..., X_{r+1})\right)<\infty$ and $\sigma^2=Var(g(X))>0$, where $g(x)=E\left(h(X_1, X_{2},..., X_{r+1})|X_1=x\right)-\beta_{r}$. Then, as $n\rightarrow\infty$, the distribution of $-2l(\beta_r)$ is $\chi^{2}(1)$.
\end{theorem}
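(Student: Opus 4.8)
The plan is to follow the jackknife empirical likelihood route of Jing et al. (2009) together with Owen's (1990) expansion of the empirical likelihood ratio. Write $m=r+1$ and set $Z_k=\widehat V_k-\beta_r$ for $k=1,\dots,n$; since the pseudo-values average to the $U$-statistic we have $\bar Z_n:=\frac1n\sum_{k=1}^n Z_k=\widehat\beta_r-\beta_r$. The whole argument rests on three facts about the $Z_k$: (i) $\sqrt n\,\bar Z_n\xrightarrow{d}N(0,\sigma^2)$; (ii) $S_n^2:=\frac1n\sum_{k=1}^n Z_k^2\xrightarrow{p}\sigma^2$; and (iii) $\max_{1\le k\le n}|Z_k|=o_p(\sqrt n)$. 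Note that (ii) and (iii) together give $\frac1n\sum_{k=1}^n|Z_k|^3\le\bigl(\max_{1\le k\le n}|Z_k|\bigr)S_n^2=o_p(\sqrt n)$, which is the only higher-order moment control the empirical-likelihood expansion will need.

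The first sub-step is to establish (i)--(iii). For (i), $\widehat\beta_r$ equals $(r+1)^{-1}$ times a $U$-statistic of degree $m$ with the bounded-square-integrable symmetric kernel $h=\max$; since $E\bigl(h(X_1,\dots,X_{r+1})\mid X_1=x\bigr)=g(x)+\beta_r$, the H\'{a}jek projection of $\widehat\beta_r-\beta_r$ equals $\frac1n\sum_{k=1}^n\bigl(g(X_k)-E(g(X))\bigr)$, and the central limit theorem for $U$-statistics (Hoeffding, 1948) gives $\sqrt n(\widehat\beta_r-\beta_r)\xrightarrow{d}N\bigl(0,Var(g(X))\bigr)=N(0,\sigma^2)$, with $\sigma^2>0$ ensuring non-degeneracy. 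For (ii) and (iii) the key point is that, although the $\widehat V_k$ are dependent (each one leaves out a single observation), they are exchangeable and the Hoeffding/$H$-decomposition of $\widehat\beta_r$ and of the leave-one-out statistics $\widehat\beta_{r,k}$ gives $Z_k=\bigl(g(X_k)-E(g(X))\bigr)+R_{n,k}$ with $\frac1n\sum_{k=1}^n R_{n,k}^2=o_p(1)$; then (ii) follows from the law of large numbers for $\bigl(g(X_k)-E(g(X))\bigr)^2$ (the variance is finite since $E\bigl((g(X)+\beta_r)^2\bigr)\le E(h^2)<\infty$ by Jensen's inequality) plus the Cauchy--Schwarz inequality on the cross term, while (iii) follows from the uniform integrability of $\{Z_k^2\}$ (a consequence of $E(h^2)<\infty$) via $P\bigl(\max_{1\le k\le n}|Z_k|>\varepsilon\sqrt n\bigr)\le n\,P\bigl(|Z_1|>\varepsilon\sqrt n\bigr)\le\varepsilon^{-2}E\bigl(Z_1^2\,\mathbb{I}(Z_1^2>\varepsilon^2 n)\bigr)\to0$. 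These conditions are exactly the ones Jing et al. (2009) verify for $U$-statistics, so this sub-step may equally well be quoted from there.

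Given (i)--(iii), the remainder is the Owen-type argument applied to the defining relations (\ref{eq7}) and (\ref{jelrat}). The usual chain of inequalities shows that $\lambda$ exists with probability tending to one, that $\lambda=O_p(n^{-1/2})$ (hence $\max_{1\le k\le n}|\lambda Z_k|=o_p(1)$), and, more precisely, that $\lambda=S_n^{-2}\bar Z_n+o_p(n^{-1/2})$. Substituting into (\ref{jelrat}) and expanding $\log(1+\lambda Z_k)=\lambda Z_k-\tfrac12\lambda^2 Z_k^2+\rho_k$ with $\sum_{k=1}^n|\rho_k|\le C|\lambda|^3\sum_{k=1}^n|Z_k|^3=o_p(1)$, one obtains $-2l(\beta_r)=2n\lambda\bar Z_n-n\lambda^2 S_n^2+o_p(1)=n\bar Z_n^2/S_n^2+o_p(1)$. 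By (i), (ii) and Slutsky's theorem, $n\bar Z_n^2/S_n^2=\bigl(\sqrt n\,\bar Z_n/\sigma\bigr)^2\cdot(\sigma^2/S_n^2)\xrightarrow{d}\chi^2(1)$, which is the assertion.

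The main obstacle is the second sub-step: showing that the dependent jackknife pseudo-values behave, in the precise senses (ii) and (iii), like an i.i.d.\ sample. This is exactly where one must check that the degenerate components of order at least two in the $H$-decomposition of the $U$-statistic contribute negligibly both on average (the $\frac1n\sum_{k=1}^n R_{n,k}^2=o_p(1)$ bound) and in the maximum, and it is there that the hypothesis $E(h^2)<\infty$ is used most delicately, being needed for the uniform integrability behind (iii). Once the pseudo-values are under control, the remaining steps are the routine empirical-likelihood algebra of Owen (1990), and no condition beyond $E(h^2)<\infty$ and $\sigma^2>0$ is required.
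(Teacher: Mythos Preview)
Your proposal is correct and follows essentially the same route as the paper's proof: both establish the three preliminary facts (CLT for $\widehat\beta_r$ via Hoeffding, $S_n^2\to\sigma^2$, and the $o_p(\sqrt n)$ max bound), then carry out the standard Owen expansion $\lambda=S_n^{-2}\bar Z_n+o_p(n^{-1/2})$ and conclude by Slutsky. If anything, you are more explicit than the paper about why the dependent pseudo-values satisfy (ii) and (iii) (via the $H$-decomposition and uniform integrability), whereas the paper simply invokes the strong law and Lemma~A.4 of Jing et al.\ (2009) for these steps.
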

\begin{proof}
Let $S=\frac{1}{n}\sum_{k=1}^{n}(\widehat{V}_{k}-\beta_r)^{2}$.
Since $\widehat\beta_r=\frac{1}{n}\sum_{k=1}^{n}\widehat{V}_k$, by strong law of large numbers we obtain
\begin{equation}\label{eq8}
      S= \sigma^{2}+o(1).
      \end{equation}
Using Lemma A.4 of Jing et al. (2009) we have
      \begin{equation}\label{eq9}
    \max_{1\le k\le n} |\widehat{V}_{k}-\beta_{r}|=o(\sqrt{n}).
  \end{equation}
Hence using (\ref{eq8}) and (\ref{eq9}) we have
\begin{equation}\label{eq111}
  \frac{1}{n}\sum_{k=1}^{n}|\widehat{V}_{k}-\beta_{r}|^{3}\le  |\widehat{V}_{k}-\beta_{r}|\frac{1}{n}\sum_{k=1}^{n}(\widehat{V}_{k}-\beta_{r})^{2}=o(\sqrt{n}).
\end{equation}
The $\lambda$ satisfying  the equation (\ref{eq7}) has the property  (Jing et al. 2009)
\begin{equation}\label{lo}
  |\lambda|=O_{p}(n^{-\frac{1}{2}}).
\end{equation}
Hence using (\ref{eq9}) we obtain
\begin{equation}\label{eq10}
    \max_{1\le k\le n} \lambda|\widehat{V}_{k}-\beta_{r}|=o({1}).
  \end{equation}Therefore, using equations (\ref{eq111}), (\ref{eq10}) and (\ref{lo}) we have
  \begin{equation*}
  \frac{1}{n}\sum_{k=1}^{n}(\widehat{V}_{k}-\beta_{r})^{3}\lambda^2|1+\lambda(\widehat V_k-\beta_r)|^{-1}=o_p(\sqrt{n})O_p({1/n})O_p(1)=o_p(1/\sqrt{n}).
\end{equation*}Hence from (\ref{eq7}) we obtain
\begin{equation}\label{eq11}
\lambda=\frac{(\widehat{\beta}_{r}-\beta_{r})}{S}+o_p(1/\sqrt n).
\end{equation}
Using Taylor's theorem, we can express  $l(\beta_r)$ given in (\ref{jelrat}) as
\begin{equation}\label{asylike}
  -2l(\beta_r)=2n\lambda(\widehat{\beta}_{r}-\beta_{r})-nS\lambda^2+R(\beta_r),
\end{equation}where $R(\beta_r)$ is the reminder term.  Using $|\lambda|=O_{p}(n^{-\frac{1}{2}})$ and (\ref{eq111}) we obtained  the reminder term $R(\beta_r)=o_p(1)$.  Hence using (\ref{eq11}), we can express (\ref{asylike}) as
\begin{equation}\label{eq15}
  -2l(\beta_r)=\frac{n(\widehat{\beta}_{r}-\beta_{r})^2}{S}+o_p(1).
\end{equation}
Using the central limit theorem for $U$-statistics, as $n\rightarrow \infty$,   the asymptotic distribution  of $\sqrt{n}(r+1)\left(\widehat{\beta}_{r}-{\beta}_{r}\right)$ is Gaussian with mean zero and variance $(r+1)^2\sigma^2$, where
 \begin{eqnarray*}
 \sigma^{2}&=&Var \left(E(max(X_1,X_2,\cdots,X_{r+1})|X_1=x)\right)\\&=&Var\left(XF^{r}(X)+r\int_{X}^{\infty}yF^{r-1}(y)dF(y)\right).
 \end{eqnarray*}
   Hence, as $n \to \infty$, $\sqrt{n}\left(\widehat{\beta}_{r}-{\beta}_{r}\right)$ converges in distribution to normal with mean zero and variance $\sigma^2$.  Accordingly $\frac{n(\widehat{\beta}_{r}-{\beta}_{r})^2}{\sigma^{2}}$  converges in distribution to $ \chi^{2}$ with one degree of freedom. In view of (\ref{eq8}), by Slutsky's theorem, from (\ref{eq15}) we have the result.\\
\end{proof}

 Using Theorem 1, JEL based confidence interval for $\beta_r$ at $100(1-\alpha)\%$ is given by
\begin{equation*}
  CI_1=\left\{\beta_r|-2l(\beta_r)\le \chi^2_{1,1-\alpha}\right\},
\end{equation*}where $\chi^2_{1,1-\alpha}$ is the $(1-\alpha)$th percentile of chi-square distribution with one degree of freedom. The performance of these confidence intervals in terms of coverage probabilities and average lengths were evaluated via a Monte Carlo simulation and the results are reported in Section 3.

Using the asymptotic distribution of jackknife empirical  log likelihood ratio we can develop JEL based test for testing the hypothesis $\beta_r=\beta_r^0$, where $\beta_r^0$ is a specific value of $\beta_r$.  We reject the null hypothesis at significance level $\alpha$ if $$-2l(\beta_r)> \chi^2_{1,1-\alpha}.$$ Simulation study shows that the type 1 error rate of the test converges to desired significance level and has good power. The results of the simulation study are also reported in Section 3.

Next we discuss construction of AJEL based confidence interval for $\beta_r$.    Define
\begin{equation}
\widehat{V}_{n+1}=-\frac{a_n}{n}\sum\limits_{k=1}^{n} \widehat{V}_{k},
\end{equation}
for some positive $a_n$. Chen et al. (2008) suggested to take $a_n=\max\lbrace 1,\log_e (n/2)\rbrace$.  The  adjusted jackknife estimator of $\beta_r$ is defined as
\begin{equation}
\widehat{\beta}_{r,adjjack}=\frac{1}{n+1}\sum\limits_{k=1}^{n+1}\widehat{V}_{k}.
\end{equation}
The adjusted jackknife empirical  likelihood of $\beta_r$ is given by
\begin{equation}\label{jel}
 R(\beta_{r})=\sup_{\bf p} \left(\prod_{k=1}^{n+1}{(n+1) p_k};\, \, p_k \ge 0; \,\, \sum_{k=1}^{n+1}{p_k}=1;\,\,\sum_{k=1}^{n+1}{p_k (\widehat{V}_k}-\beta_r)=0\right).
\end{equation}
The maximum of (\ref{6}) occurs at
\begin{equation*}
  p_k=\frac{1}{n+1}\left(1+\lambda_1(\widehat{V}_{k}-\beta_r)\right)^{-1}, k=1,2,...,n+1,
\end{equation*}
where $\lambda_1$ is the solution of
\begin{equation*}
  \frac{1}{n+1}\sum_{k=1}^{n+1}{\frac{\widehat{V}_{k}-\beta_r}{1+\lambda_1 (\widehat{V}_{k}-\beta_r)}}=0.
\end{equation*}
Hence, the adjusted  jackknife empirical log likelihood ratio  for  $\beta_r$ is given by
\begin{equation*}
  l_1(\beta_r)=-\sum_{k=1}^{n+1}\log\left[1+\lambda_1 (\hat{V}_{k}-\beta_r)\right].
\end{equation*}

\begin{theorem}
Under the assumption of  Theorem 1, and for $a_n = o_p(n^{2/3})$,  as $n\rightarrow \infty$,  $-2l_{1}(\beta_{r})$, is distributed as $\chi^2(1)$.
\end{theorem}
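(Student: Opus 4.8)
The plan is to mimic the proof of Theorem~1 line by line, the only new ingredient being the control of the single adjustment point $\widehat V_{n+1}$; the hypothesis $a_n=o_p(n^{2/3})$ is exactly what is needed to render that point asymptotically harmless. Write $W_k=\widehat V_k-\beta_r$ for $k=1,\dots,n+1$ and $S_1=\frac{1}{n+1}\sum_{k=1}^{n+1}W_k^2$. Since $\frac1n\sum_{k=1}^n\widehat V_k=\widehat\beta_{r,jack}=\widehat\beta_r$ and the pseudo-values are unbiased for $\beta_r$, the (centred) adjustment point satisfies $W_{n+1}=-a_n(\widehat\beta_r-\beta_r)$, so, using $\widehat\beta_r-\beta_r=O_p(n^{-1/2})$ from the $U$-statistic central limit theorem together with $a_n=o_p(n^{2/3})$,
\[
|W_{n+1}|=o_p(n^{1/6})=o_p(\sqrt n),\qquad \frac{W_{n+1}^2}{n+1}=o_p(1),\qquad \frac{|W_{n+1}|^3}{n+1}=o_p(n^{-1/2});
\]
consequently $\widehat\beta_{r,adjjack}-\beta_r=\frac{1}{n+1}\sum_{k=1}^{n+1}W_k=\frac{n-a_n}{n+1}(\widehat\beta_r-\beta_r)=(1+o_p(1))(\widehat\beta_r-\beta_r)$.

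First I would re-derive, for the enlarged collection of $n+1$ values, the three elementary order bounds of Theorem~1. Peeling off the $(n+1)$-th summand and invoking (\ref{eq8}) gives $S_1=\frac{n}{n+1}S+o_p(1)=\sigma^2+o_p(1)$; combining (\ref{eq9}) with $|W_{n+1}|=o_p(\sqrt n)$ gives $\max_{1\le k\le n+1}|W_k|=o_p(\sqrt n)$; and combining (\ref{eq111}) with $|W_{n+1}|^3/(n+1)=o_p(n^{-1/2})$ gives $\frac{1}{n+1}\sum_{k=1}^{n+1}|W_k|^3=o_p(\sqrt n)$. Feeding these, together with $S_1\to\sigma^2>0$ and $\widehat\beta_{r,adjjack}-\beta_r=O_p(n^{-1/2})$, into the Owen-type inequality of Jing et al.\ (2009) applied to the adjusted estimating equation $\frac{1}{n+1}\sum_{k=1}^{n+1}\frac{W_k}{1+\lambda_1 W_k}=0$ then yields $|\lambda_1|=O_p(n^{-1/2})$ and hence $\max_{1\le k\le n+1}|\lambda_1 W_k|=o_p(1)$.

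With these estimates the remainder of the argument is a verbatim transcription of the proof of Theorem~1 with $n$ replaced by $n+1$: solving the adjusted estimating equation to one more order gives $\lambda_1=(\widehat\beta_{r,adjjack}-\beta_r)/S_1+o_p(n^{-1/2})$, while a Taylor expansion of the adjusted log-likelihood ratio gives
\[
-2l_1(\beta_r)=2(n+1)\lambda_1(\widehat\beta_{r,adjjack}-\beta_r)-(n+1)S_1\lambda_1^2+R,
\]
with $|R|\le|\lambda_1|^3\sum_{k=1}^{n+1}|W_k|^3=O_p(n^{-3/2})\,O_p(n)\,o_p(\sqrt n)=o_p(1)$. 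Substituting the expansion for $\lambda_1$ gives
\[
-2l_1(\beta_r)=\frac{(n+1)(\widehat\beta_{r,adjjack}-\beta_r)^2}{S_1}+o_p(1).
\]
Since $\sqrt{n+1}\,(\widehat\beta_{r,adjjack}-\beta_r)=(1+o_p(1))\sqrt n\,(\widehat\beta_r-\beta_r)$ is asymptotically $N(0,\sigma^2)$ by the same $U$-statistic central limit theorem used in Theorem~1, and $S_1$ converges in probability to $\sigma^2>0$, Slutsky's theorem shows that $-2l_1(\beta_r)$ is asymptotically $\chi^2(1)$.

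The main obstacle I anticipate is verifying that the adjusted Lagrange multiplier still obeys $|\lambda_1|=O_p(n^{-1/2})$ and that the Taylor remainder $R$ is $o_p(1)$ — equivalently, checking that the extra, comparatively large point $\widehat V_{n+1}$ does not corrupt the Owen-type bound. This is precisely where the condition $a_n=o_p(n^{2/3})$ enters: it forces every quantity involving $W_{n+1}$ to be of strictly smaller order than its analogue coming from $W_1,\dots,W_n$, so that (\ref{eq8})--(\ref{eq111}) and the bound $|\lambda|=O_p(n^{-1/2})$ all persist for the enlarged system. Everything else is routine bookkeeping that mirrors the single-sample computation already carried out in the proof of Theorem~1.
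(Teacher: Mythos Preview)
Your proposal is correct and follows the same Taylor-expansion route as the paper: both bound $|\lambda_1|=O_p(n^{-1/2})$, expand $-2l_1(\beta_r)$ to second order, and verify that the single adjustment point contributes only $o_p(1)$, thereby reducing to the quadratic form of Theorem~1. Your version is considerably more detailed---you carry the $(n{+}1)$-term averages $\widehat\beta_{r,adjjack}$ and $S_1$ throughout and make explicit where $a_n=o_p(n^{2/3})$ enters via the cubic bound $|W_{n+1}|^3/(n+1)=o_p(n^{-1/2})$, whereas the paper simply discards the $(n{+}1)$-th summand as $a_nO_p(n^{-3/2})=o_p(1)$---but the underlying argument is the same.
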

\begin{proof}
The proof follows on similar lines of proof of Theorem 1.
Note that as long as $a_n = o_p(n)$, we have $|\lambda_1| = O_p(1/\sqrt n)$.
Consider
\begin{eqnarray*}\label{ajel}
  -2l_1(\beta_r)&=&2\sum_{k=1}^{n+1} \log(1+\lambda \widehat V_k)
  \\&=&2\sum_{k=1}^{n+1}\left(\lambda\widehat V_k- \lambda^2\widehat V_{k}^{2}/2\right)+ o_p(1)\\&=& 2n\lambda(\widehat{\beta}_{r}-\beta_{r})-nS\lambda^2+o_p(1)\\&=&
  \frac{n(\widehat{\beta}_{r}-\beta_{r})^2}{S}+o_p(1),
\end{eqnarray*} where the second last identity follows from the fact that the $(n + 1)$th term of the summation is $a_n O_p(n^{-3/2})=o_p(n)O_p(n^{-3/2})=o_p(1)$. Hence by Slutsky's theorem we have the result.
\end{proof}
 Using the asymptotic distribution of adjusted jackknife empirical log likelihood ratio we can construct a confidence interval and likelihood ratio test for $\beta_r$. A $100(1-\alpha) \%$ AJEL based confidence interval for $\beta_r$ is given by
\begin{equation*}
  CI_2=\left\{\beta_r|l_{1}(\beta_{r})\le \chi^2_{1,1-\alpha}\right\}.
\end{equation*}
In AJEL based ratio test for testing the hypothesis $\beta_r=\beta_r^0$, we reject the hypothesis when $ -2l_{1}(\beta_{r})> \chi^2_{1,1-\alpha}$.  The performance of these confidence intervals and tests are also evaluated in Section 3.

\begin{figure}[htp]
\centering
\includegraphics[width=150mm]{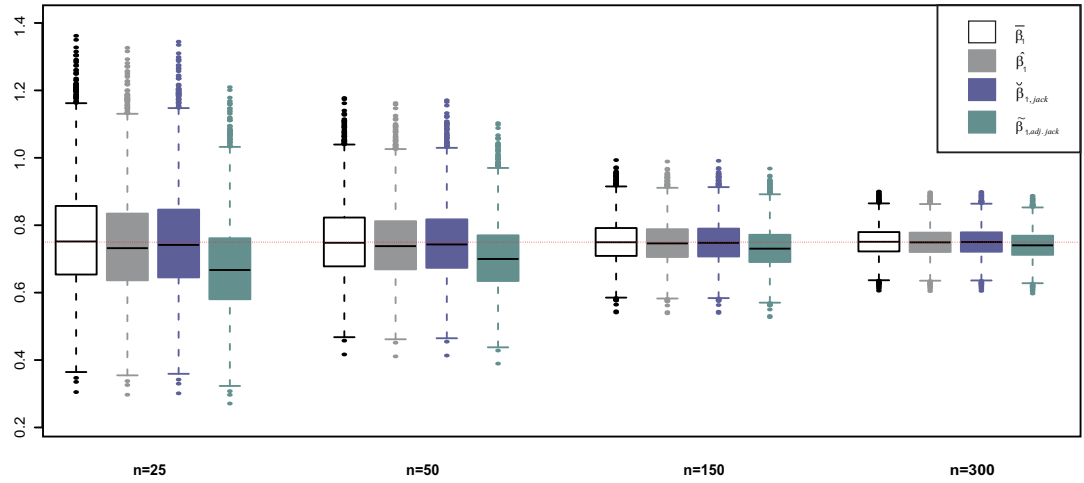}
\caption{Box plot of estimated $\beta_1$ for Exp(1) obtained by different estimator for different sample sizes.}
\end{figure}

\begin{figure}[htp]
\centering
\includegraphics[width=150mm]{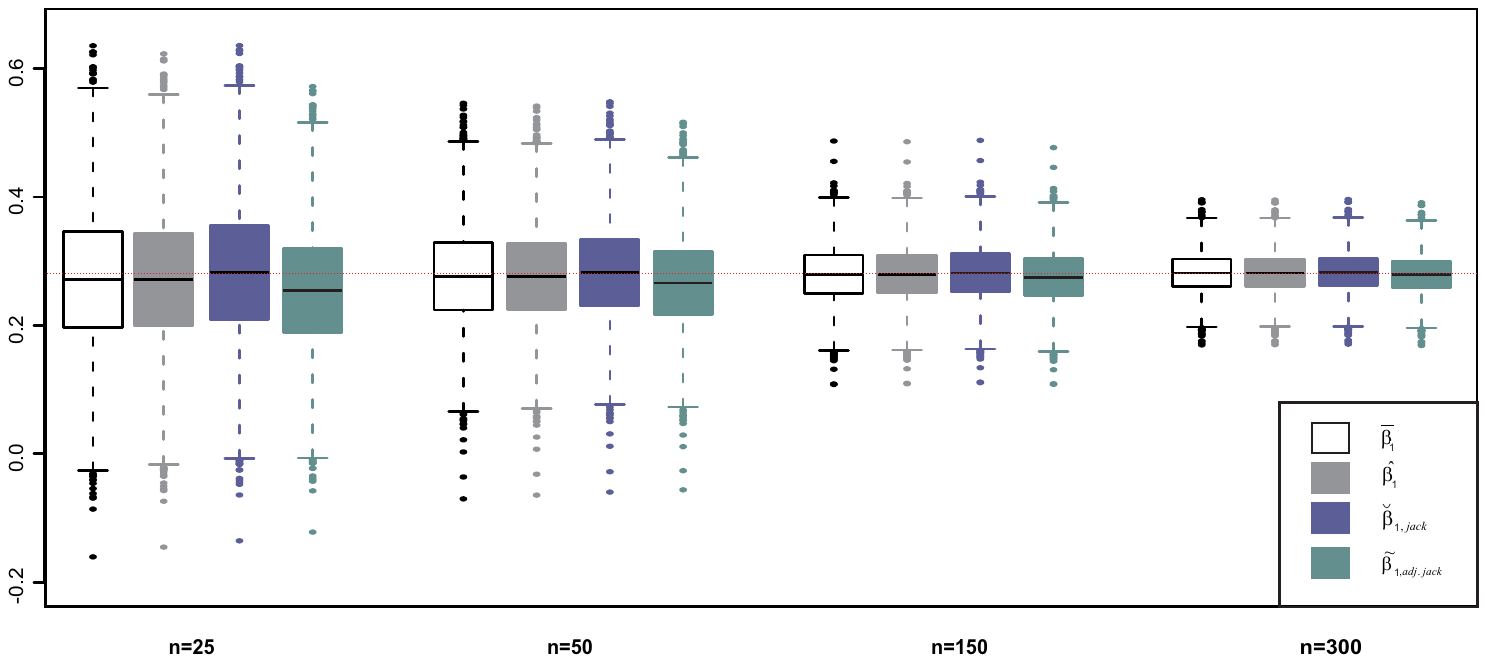}
\caption{Box plot of estimated $\beta_1$ for Normal(0,1) obtained by different estimator for different sample sizes.}
\end{figure}

\begin{figure}[htp]
\centering
\includegraphics[width=150mm]{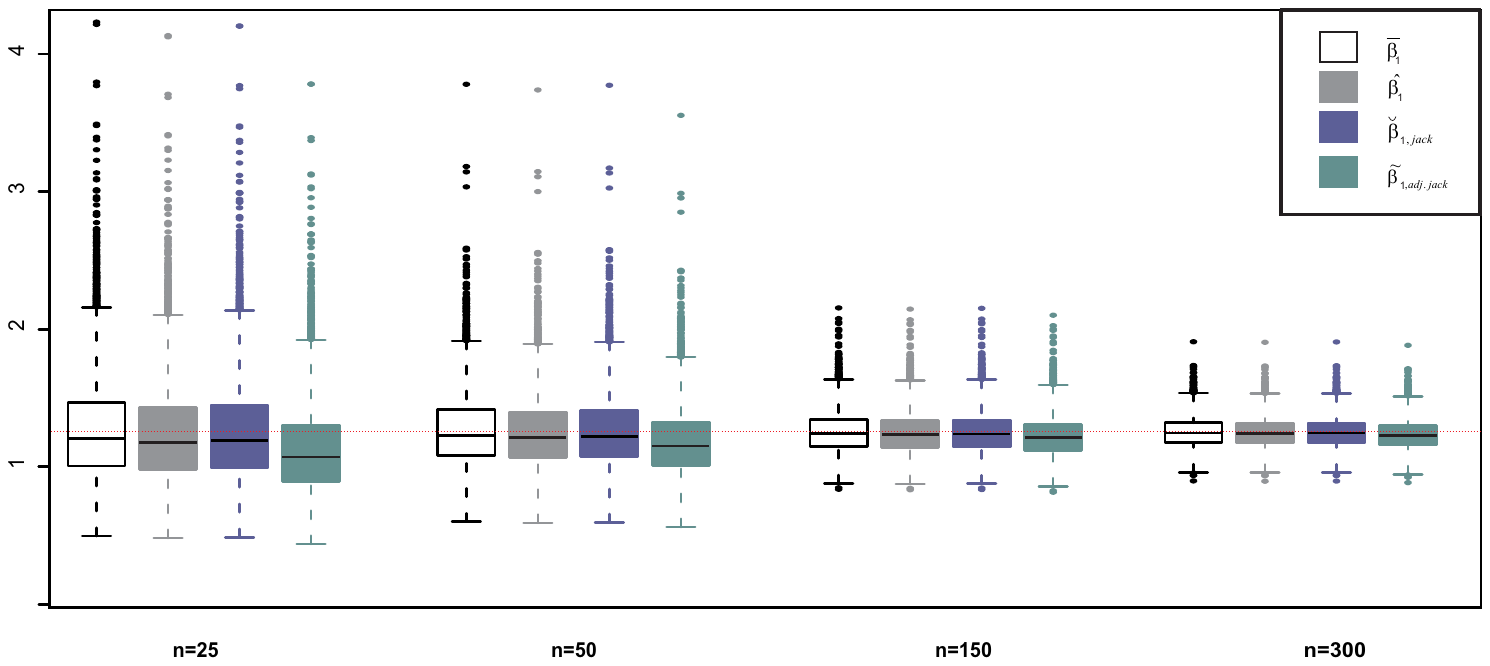}
\caption{Box plot of estimated $\beta_1$ for Lognormal(0,1) obtained by different estimator for different sample sizes.}
\end{figure}

The box plots of  four estimators $\overline{\beta}_1$, $\tilde{\beta}_1$, $\widehat{\beta}_{1,jack}$ and $\widehat{\beta}_{1,adjjack}$ obtained for different sample sizes $n=25,50, 150, 300$ are shown in Figures 1-3.   In Figure 1, we generate observations from standard exponential distribution and computed the four estimators for $n=25,50, 150, 300$.  The Monte-Carlo simulation procedure is repeated 5000 times. The red horizontal line across the box plot represents the actual value of $\beta_1$ when $X$  follows standard exponential. Similarly in Figure 2 and 3 show box plots of the estimators when the observations are generated from standard normal and standard lognormal distributions respectively.

\section{Monte carlo study}
\par In this section, we perform a Monte Carlo simulation study to compare the performance of  JEL and AJEL based confidence intervals with the confidence intervals obtained by empirical likelihood method constructed using D-N estimator (DNEL) as well as Vexler's estimator (VXL). The simulation is done using R.  We find empirical type 1 error as well as power of the proposed JEL and AJEL based tests. In our study, the observations are simulated from standard exponential, standard normal and standard log normal distributions with different sample sizes, $n=25, 50,100, 200$ and $300$.
The simulation procedure is repeated for five thousand times.  Note that we need to find the values of $\lambda$ and $\lambda_1$ to construct the confidence intervals and to obtain the critical regions of the proposed tests. We used the R  functions $uniroot$ and $optimize$ to find the values of $\lambda$ and $\lambda_1$.

The Table 1 gives the Monte Carlo variances of the four estimators  $\overline{\beta}_{r}$, $\tilde{\beta}_{r}$, $\widehat{\beta}_{r,jack}$ and $\widehat{\beta}_{r,adjjack}$ calculated based on the observations generated from standard exponential, standard normal and standard log normal distributions, respectively. We can see that $\widehat{\beta}_{r,adjjack}$ provides the least variance in almost all cases.

In Table 2 we report the coverage probability obtained for the four confidence intervals mentioned above.  From Table 2 it is clear that in all three cases, the coverage probabilities  of all four confidence intervals converge to the actual target value $(0.95)$. Hence it is worth to compare the average length of these intervals. In Table 3 we report the average length of the confidence intervals obtained for the confidence level $0.95$.  When $X$ has standard exponential and standard normal distribution  JEL based confidence intervals performs better than the other confidence intervals in terms of average length. In log normal case, in most of the cases JEL and AJEL based   confidence intervals  have smaller length compared  to other two confidence intervals.

\par The empirical type 1 error of the proposed tests are listed in Table 4. We find the type 1 error rate  for $\beta_r$ for $r=1,2,3,4$ when the samples are generated from above listed three distributions. For small sample sizes $n=25,50$,  AJEL based test has well controlled  type 1 error rates  for normally distributed data.  From Table 4, it is clear that for all the cases, empirical type 1 error reaches the nominal value $\alpha=0.05$ as the sample size increases.
\onehalfspace
Tables 5, 6 and 7 provide the power comparison for the JEL, AJEL, DNEL and VXL based tests when the null hypothesis is related to the following cases.
\begin{enumerate}
\item $\beta_1=0.68 $, $\beta_2= 0.55$, $\beta_3=0.47 $, $\beta_4= 0.41$ (or $X \sim Exp(0.9)$)
\item $\beta_1= 0.6$, $\beta_2=0.49 $, $\beta_3= 0.42$, $\beta_4=0.37 $ (or $X \sim Exp(0.8)$)
\item $\beta_1=0.56 $, $\beta_2=0.56 $, $\beta_3= 0.51$, $\beta_4=0.47 $ (or $X \sim Normal(0,4)$)
\item $\beta_1=0.84 $, $\beta_2=0.84 $, $\beta_3=0.77 $, $\beta_4=0.70 $ (or $X \sim Normal(0,9)$)
\item $\beta_1=2.63 $, $\beta_2=2.36 $, $\beta_3=2.16 $, $\beta_4=2.01 $ (or $X \sim  Lognormal(0,1.5)$)
\item $\beta_1=6.81 $, $\beta_2=6.39 $, $\beta_3=6.08 $, $\beta_4=5.82 $ (or $X \sim Lognormal(0,2)$)
\end{enumerate}

\noindent From Tables 5, 6 and 7, we can see that empirical power of the JEL based test higher than other tests.

\doublespace
\begin{singlespace}
\begin{landscape}
\begin{table}[]
\centering
\caption{Monte-Carlo Variance times the sample size for each estimator: $W_1=n Var(\overline{\beta}_r)$, $W_2=n Var(\widehat{\beta}_r)$, $W_3=n Var(\widehat{\beta}_{r,jack})$ and $W_4=n Var(\widehat{\beta}_{r,adjjack})$.}
\label{}
\begin{tabular}{c|c|cccccccccccccc} \hline
&     & \multicolumn{4}{c}{exp(1)}            &  & \multicolumn{4}{c}{Normal(0,1)}       &  & \multicolumn{4}{c}{log-normal(0,1)}   \\ \cline{3-6} \cline{8-11} \cline{13-16}
 & $n$   & $W_1$  & $W_2$    & $W_3$ & $W_4$ &  & $W_1$  & $W_2$    & $W_3$ & $W_4$ &  & $W_1$  & $W_2$    & $W_3$ & $W_4$ \\ \cline{2-6} \cline{8-11} \cline{13-16}
\multirow{4}{*}{$r$=1} & 25  & 0.5965 & 0.5665 & 0.5831    & 0.4719  &  & 0.3136 & 0.2929 & 0.2963    & 0.2398  &  & 3.6880 & 3.7375 & 3.6053    & 3.5759  \\
                     & 50  & 0.5885 & 0.5736 & 0.5819    & 0.5164  &  & 0.2935 & 0.2835 & 0.2850    & 0.2529  &  & 3.6245 & 3.6538 & 3.5490    & 3.5917  \\
                     & 150 & 0.5709 & 0.5661 & 0.5688    & 0.5427  &  & 0.2980 & 0.2946 & 0.2952    & 0.2816  &  & 3.6509 & 3.7183 & 3.6696    & 3.6020  \\
                     & 300 & 0.5994 & 0.5969 & 0.5983    & 0.5831  &  & 0.2933 & 0.2916 & 0.2919    & 0.2844  &  & 3.5998  & 3.5548  & 3.4058    & 3.5104  \\
&     &        &        &           &         &  &        &        &           &         &  &        &        &           &         \\
\multirow{4}{*}{$r$=2} & 25  & 0.4266 & 0.3884 & 0.4091    & 0.3311  &  & 0.1690 & 0.1519 & 0.1556    & 0.1259  &  & 2.8852 & 2.6441 & 2.8250    & 2.2864  \\
                     & 50  & 0.4119 & 0.3929 & 0.4031    & 0.3577  &  & 0.1618 & 0.1534 & 0.1553    & 0.1378  &  & 3.0319 & 2.9030 & 3.0012    & 2.6634  \\
                     & 150 & 0.4255 & 0.4188 & 0.4224    & 0.4030  &  & 0.1583 & 0.1555 & 0.1561    & 0.1490  &  & 3.0024 & 2.9592 & 2.9922    & 2.8549  \\
                     & 300 & 0.4155 & 0.4123 & 0.4140    & 0.4035  &  & 0.1582 & 0.1568 & 0.1571    & 0.1531  &  & 3.0001 & 2.9785 & 2.9949    & 2.9188  \\
&     &        &        &           &         &  &        &        &           &         &  &        &        &           &         \\
\multirow{4}{*}{$r$=3} & 25  & 0.3433 & 0.3000 & 0.3227    & 0.2612  &  & 0.1128 & 0.0976 & 0.1014    & 0.0821  &  & 2.8900 & 2.5455 & 2.8099    & 2.2742  \\
                     & 50  & 0.3306 & 0.3090 & 0.3203    & 0.2842  &  & 0.1068 & 0.0993 & 0.1012    & 0.0898  &  & 2.6063 & 2.4449 & 2.5654    & 2.2766  \\
                     & 150 & 0.3130 & 0.3059 & 0.3096    & 0.2954  &  & 0.1008 & 0.0983 & 0.0989    & 0.0944  &  & 2.6165 & 2.5615 & 2.6034    & 2.4840  \\
                     & 300 & 0.3263 & 0.3226 & 0.3246    & 0.3163  &  & 0.1054 & 0.1042 & 0.1045    & 0.1018  &  & 2.7465 & 2.7175 & 2.7397    & 2.6701  \\
&     &        &        &           &         &  &        &        &           &         &  &        &        &           &         \\
\multirow{4}{*}{$r$=4} & 25  & 0.2880 & 0.2419 & 0.2659    & 0.2152  &  & 0.0865 & 0.0720 & 0.0759    & 0.0614  &  & 2.7197 & 2.3014 & 2.6157    & 2.1170  \\
                     & 50  & 0.2700 & 0.2473 & 0.2590    & 0.2298  &  & 0.0800 & 0.0729 & 0.0747    & 0.0663  &  & 2.6946 & 2.4782 & 2.6426    & 2.3452  \\
                     & 150 & 0.2560 & 0.2486 & 0.2524    & 0.2408  &  & 0.0741 & 0.0718 & 0.0725    & 0.0691  &  & 2.3813 & 2.3156 & 2.3653    & 2.2567  \\
                     & 300 & 0.2672 & 0.2633 & 0.2653    & 0.2586  &  & 0.0722 & 0.0711 & 0.0714    & 0.0696  &  & 2.3917 & 2.3584 & 2.3834    & 2.3228 \\ \hline
\end{tabular}
\end{table}
\end{landscape}
\end{singlespace}
  \begin{singlespace}
\begin{landscape}
\begin{table}[]
\centering
\caption{{Coverage probability of likelihood ratio tests  for different $r$ and different distributions}}
\label{my-label}
\begin{tabular}{cc|cccccccccccccc} \hline
                     & $n$    & \multicolumn{4}{c}{Exp(1)}     &  & \multicolumn{4}{c}{Normal(0,1)}  &  & \multicolumn{4}{c}{Lognormal(0,1)} \\ \cline{2-6}  \cline{8-11} \cline{13-16}
                     &     & DNEL    & VLX   & JEL   & AJEL &  & DNEL    & VLX   & JEL   & AJEL &  & DNEL     & VLX    & JEL   & AJEL \\ \cline{3-6}  \cline{8-11} \cline{13-16}
\multirow{5}{*}{$r$=1} &25  & 0.913  &0.908	 &0.902	 &0.905	&	&0.919	&0.928	&0.912	&0.953	&& 0.869 & 0.853 & 0.851 & 0.839 \\
                       &50  & 0.939	 &0.933	 &0.932	 &0.934	&   &0.943	&0.948	&0.940	&0.960	&& 0.903 & 0.891 & 0.889 & 0.890  \\
					   &100 & 0.941	 &0.938	 &0.936	 &0.947	&   &0.950	&0.951	&0.947	&0.958	&& 0.925 & 0.916 & 0.918 & 0.917 \\
					   &200 & 0.943	 &0.942	 &0.942	 &0.946	&   &0.949	&0.951	&0.950	&0.957	&& 0.933 & 0.925 & 0.925 & 0.927 \\
					   &300 & 0.946	 &0.946	 &0.946	 &0.950	&   &0.952	&0.950	&0.948	&0.953	&& 0.935 & 0.933 & 0.932 & 0.930  \\
                     &     &       &       &       &        &  &       &       &       &        &  &        &        &       &        \\
\multirow{5}{*}{$r$=2} & 25  & 0.903	&0.951	&0.881	&0.891	&&0.912	&0.967	&0.906	&0.947	&&	0.862	&0.948	&0.839	& 0.826  \\
					   &50   & 0.935	&0.924	&0.924	&0.932	&&	0.94	&0.948	&0.932	&0.961	&&	0.898	&0.886	&0.881	&0.876\\
					   &100  & 0.945	&0.939	&0.938	&0.948  &&	0.945	&0.949	&0.941	&0.958	&&	0.929	&0.912	&0.912	&0.912\\
					   &200  & 0.949	&0.948	&0.945	&0.954  &&	0.950	&0.951	&0.949	&0.956	&&	0.935	&0.924	&0.923	&0.922\\
					   &300  & 0.951	&0.948	&0.949	&0.954  &&	0.944	&0.941	&0.943	&0.947	&&	0.933	&0.924	&0.923	&0.927 \\
                     &     &       &       &       &        &  &       &       &       &        &  &        &        &       &        \\
\multirow{5}{*}{$r$=3} &25   & 0.903	    &0.888	&0.878	&0.892	&&0.903	    &0.927	&0.897	&0.941	&&	0.855	&0.831	&0.829	&0.819 \\
					   &50   & 0.930	    &0.874	&0.911	&0.926	&&0.936		&0.916	&0.926	&0.957	&&	0.899	&0.878	&0.873	&0.876\\
					   &100  & 0.934	&0.915	&0.925	&0.935	&&0.946		&0.900	&0.937	&0.957	&&	0.917	&0.885	&0.899	&0.902\\
					   &200  & 0.950	    &0.951	&0.944	&0.950	&&0.947		&0.908	&0.944	&0.955	&&	0.936	&0.915	&0.922	&0.920\\
					   &300  & 0.947	&0.942	&0.940	&0.947	&&0.949		&0.913	&0.948	&0.957	&&	0.931	&0.915	&0.917	&0.921 \\
                     &     &       &       &       &        &  &       &       &       &        &  &        &        &       &        \\
\multirow{5}{*}{$r$=4} &25  & 0.837	&0.881	&0.858	&0.884	&&	0.897	&0.929	&0.887	&0.932	&&	0.846	&0.866	&0.804 &	0.808\\
					   &50  & 0.926	&0.843	&0.903	&0.922 	&&	0.918	&0.808	&0.909	&0.945  &&  0.888	&0.821	&0.86	& 0.864\\
					   &100 & 0.935	&0.925	&0.927	&0.938 	&&	0.937	&0.894	&0.931	&0.952	&&	0.914	&0.878	&0.892	& 0.890\\
					   &200 & 0.947	&0.944	&0.940	&0.950  &&	0.947	&0.952	&0.942	&0.954	&&	0.926	&0.922	&0.914	& 0.913 \\
					   &300 & 0.942	&0.944	&0.934	&0.941	&&	0.947	&0.948	&0.945	&0.956	&&	0.937	&0.925	&0.927	& 0.926  \\ \hline
\end{tabular}
\end{table}

\begin{table}[]
\centering
\caption{Average length}
\label{my-label}
\begin{tabular}{cc|cccccccccccccc} \hline
                     & $n$   & \multicolumn{4}{c}{Exp(1)}        &  & \multicolumn{4}{c}{Normal(0,1)}     &      & \multicolumn{4}{c}{Lognormal(0,1)}  \\ \cline{2-6}  \cline{8-11} \cline{13-16}
                     &     & DNEL     & VLX    & JEL    & AJEL &  & DNEL     & VLX    & JEL    & AJEL   &      & DNEL     & VLX    & JEL    & AJEL \\ \cline{3-6}  \cline{8-11} \cline{13-16}
\multirow{5}{*}{$r$=1} & 25  & 0.9776 & 0.8696 & 0.8388 & 0.8743 &  & 0.6463 & 0.6178 & 0.5473 & 0.6403 &      & 3.5341 & 3.5113 & 3.4718 & 3.7529 \\
                     & 50  & 0.6737 & 0.5592 & 0.5475 & 0.5816 &  & 0.3700 & 0.3618 & 0.3556 & 0.3557   &      & 2.6065 & 2.5382 & 2.5306 & 2.3840 \\
                     & 100 & 0.4448 & 0.3504 & 0.3478 & 0.3837 &  & 0.2523 & 0.2420 & 0.2358 & 0.2413   &      & 1.3274 & 1.2837 & 1.2805 & 1.2168 \\
                     & 200 & 0.3060 & 0.2643 & 0.2549 & 0.2530 &  & 0.1751 & 0.1748 & 0.1711 & 0.1626   &      & 0.9948 & 0.8870 & 0.8821 & 0.9110 \\
                     & 300 & 0.2446 & 0.1957 & 0.1933 & 0.2063 &  & 0.1422 & 0.1322 & 0.1306 & 0.1313   &      & 0.6882 & 0.6394 & 0.6303 & 0.7153 \\
                     &     &        &        &        &        &  &        &        &        &          &      &        &        &        &        \\
\multirow{5}{*}{$r$=2} & 25  & 0.9031 & 0.6514 & 0.6409 & 0.6859 &  & 0.4788 & 0.4745 & 0.4173 & 0.4419 &      & 3.6222 & 3.5692 & 3.5686 & 3.1831 \\
                     & 50  & 0.6508 & 0.4793 & 0.4733 & 0.4908 &  & 0.3280 & 0.2651 & 0.2524 & 0.2871   &      & 2.0383 & 1.7987 & 1.4620 & 1.9150 \\
                     & 100 & 0.4338 & 0.3209 & 0.3053 & 0.3390 &  & 0.2247 & 0.1854 & 0.1782 & 0.1946   &      & 1.3018 & 1.1537 & 1.1531 & 1.1154 \\
                     & 200 & 0.3141 & 0.2205 & 0.2193 & 0.2176 &  & 0.1616 & 0.1221 & 0.1203 & 0.1214   &      & 0.9575 & 0.8599 & 0.8539 & 0.8384 \\
                     & 300 & 0.2383 & 0.1822 & 0.1806 & 0.1813 &  & 0.1254 & 0.0978 & 0.0956 & 0.0994   &      & 0.7941 & 0.7565 & 0.7508 & 0.7294 \\
                     &     &        &        &        &        &  &        &        &        &        &        &        &        &        &        \\
\multirow{5}{*}{$r$=3} & 25  & 0.9111 & 0.7413 & 0.7188 & 0.6450 &  & 0.4431 & 0.3817 & 0.3778 & 0.3769 &      & 2.8677 & 3.2130 & 2.6444 & 2.9005 \\
                     & 50  & 0.6432 & 0.5342 & 0.4397 & 0.4273 &  & 0.3292 & 0.2688 & 0.2416 & 0.2579   &      & 2.0338 & 1.9462 & 1.9358 & 2.3499 \\
                     & 100 & 0.4622 & 0.2875 & 0.2748 & 0.3007 &  & 0.2256 & 0.2220 & 0.1519 & 0.1530   &      & 1.4327 & 1.3802 & 1.3180 & 1.2739 \\
                     & 200 & 0.3025 & 0.2023 & 0.1943 & 0.1956 &  & 0.1519 & 0.1178 & 0.0999 & 0.1024   &      & 0.9472 & 0.8419 & 0.8336 & 0.8184 \\
                     & 300 & 0.2457 & 0.1625 & 0.1499 & 0.1606 &  & 0.1207 & 0.0900 & 0.0804 & 0.0816   &      & 0.6521 & 0.5749 & 0.5667 & 0.5585 \\
                     &     &        &        &        &        &  &        &        &        &          &      &        &        &        &        \\
\multirow{5}{*}{$r$=4} & 25  & 0.8023 & 0.5576 & 0.5485 & 0.6734 &  & 0.4546 & 0.2925 & 0.2922 & 0.3064 &      & 3.3343 & 2.6987 & 2.3671 & 2.7833 \\
                     & 50  & 0.5911 & 0.3720 & 0.3659 & 0.4000 &  & 0.3216 & 0.1965 & 0.1918 & 0.2022   &      & 1.6487 & 1.5862 & 1.5822 & 2.0474 \\
                     & 100 & 0.4308 & 0.2940 & 0.2923 & 0.3067 &  & 0.2104 & 0.1360 & 0.1346 & 0.1303   &      & 1.5489 & 1.2440 & 1.2403 & 1.3687 \\
                     & 200 & 0.2997 & 0.1719 & 0.1691 & 0.1796 &  & 0.1426 & 0.1174 & 0.0900 & 0.0922   &      & 0.9711 & 0.9209 & 0.8783 & 0.8510 \\
                     & 300 & 0.2333 & 0.1352 & 0.1302 & 0.1422 &  & 0.1145 & 0.0806 & 0.0760 & 0.0797   &      & 0.7077 & 0.5738 & 0.5716 & 0.5934 \\ \hline
\end{tabular}
\end{table}
\end{landscape}
\end{singlespace}

\begin{singlespace}
\begin{landscape}
\begin{table}[]
\centering
\caption{Size of the likelihood ratio tests for different $r$ }
\label{my-label}
\begin{tabular}{cc|cccccccccccccc} \hline
                     & $n$    & \multicolumn{4}{c}{Exp(1)}     &  & \multicolumn{4}{c}{Normal(0,1)}  &  & \multicolumn{4}{c}{Lognormal(0,1)} \\ \cline{2-6}  \cline{8-11} \cline{13-16}
                     &     & DNEL    & VLX   & JEL   & AJEL &  & DNEL    & VLX   & JEL   & AJEL&  & DNEL     & VLX    & JEL   & AJEL \\ \cline{3-6}  \cline{8-11} \cline{13-16}
\multirow{5}{*}{$r$=1} & 25  & 0.095 & 0.092 & 0.098 & 0.087  &  & 0.081 & 0.072 & 0.088 & 0.047  &  & 0.131  & 0.147  & 0.149 & 0.161  \\
                     & 50  & 0.061 & 0.067 & 0.068 & 0.066  &  & 0.057 & 0.052 & 0.060 & 0.040  &  & 0.097  & 0.109  & 0.111 & 0.110  \\
                     & 100 & 0.059 & 0.062 & 0.064 & 0.053  &  & 0.050 & 0.049 & 0.053 & 0.042  &  & 0.075  & 0.084  & 0.082 & 0.083  \\
                     & 200 & 0.057 & 0.058 & 0.058 & 0.054  &  & 0.051 & 0.049 & 0.050 & 0.043  &  & 0.067  & 0.075  & 0.075 & 0.073  \\
                     & 300 & 0.054 & 0.054 & 0.054 & 0.050  &  & 0.048 & 0.047 & 0.052 & 0.050  &  & 0.065  & 0.067  & 0.068 & 0.070  \\
                     &     &       &       &       &        &  &       &       &       &        &  &        &        &       &        \\
\multirow{5}{*}{$r$=2} & 25  & 0.097 & 0.049 & 0.119 & 0.109  &  & 0.088 & 0.063 & 0.094 & 0.053  &  & 0.138  & 0.052  & 0.161 & 0.174  \\
                     & 50  & 0.065 & 0.076 & 0.076 & 0.065  &  & 0.060 & 0.052 & 0.068 & 0.039  &  & 0.102  & 0.114  & 0.119 & 0.124  \\
                     & 100 & 0.055 & 0.061 & 0.062 & 0.052  &  & 0.055 & 0.051 & 0.059 & 0.042  &  & 0.071  & 0.088  & 0.088 & 0.088  \\
                     & 200 & 0.051 & 0.052 & 0.055 & 0.046  &  & 0.050 & 0.049 & 0.051 & 0.044  &  & 0.065  & 0.076  & 0.077 & 0.078  \\
                     & 300 & 0.049 & 0.052 & 0.051 & 0.046  &  & 0.056 & 0.059 & 0.057 & 0.053  &  & 0.067  & 0.076  & 0.077 & 0.073  \\
                     &     &       &       &       &        &  &       &       &       &        &  &        &        &       &        \\
\multirow{5}{*}{$r$=3} & 25  & 0.100 & 0.112 & 0.122 & 0.108  &  & 0.097 & 0.073 & 0.103 & 0.059  &  & 0.145  & 0.169  & 0.171 & 0.181  \\
                     & 50  & 0.070 & 0.126 & 0.089 & 0.074  &  & 0.064 & 0.084 & 0.074 & 0.043  &  & 0.101  & 0.122  & 0.127 & 0.124  \\
                     & 100 & 0.066 & 0.085 & 0.075 & 0.065  &  & 0.054 & 0.100 & 0.063 & 0.043  &  & 0.083  & 0.115  & 0.101 & 0.098  \\
                     & 200 & 0.054 & 0.058 & 0.060 & 0.053  &  & 0.053 & 0.092 & 0.056 & 0.045  &  & 0.064  & 0.085  & 0.078 & 0.080  \\
                     & 300 & 0.051 & 0.049 & 0.056 & 0.050  &  & 0.051 & 0.087 & 0.052 & 0.043  &  & 0.069  & 0.085  & 0.083 & 0.079  \\
                     &     &       &       &       &        &  &       &       &       &        &  &        &        &       &        \\
\multirow{5}{*}{$r$=4} & 25  & 0.116 & 0.119 & 0.142 & 0.121  &  & 0.103 & 0.071 & 0.113 & 0.068  &  & 0.154  & 0.134  & 0.196 & 0.192  \\
                     & 50  & 0.074 & 0.157 & 0.097 & 0.078  &  & 0.082 & 0.192 & 0.091 & 0.055  &  & 0.112  & 0.179  & 0.140 & 0.136  \\
                     & 100 & 0.062 & 0.075 & 0.073 & 0.065  &  & 0.063 & 0.106 & 0.069 & 0.048  &  & 0.086  & 0.122  & 0.108 & 0.110  \\
                     & 200 & 0.058 & 0.056 & 0.060 & 0.059  &  & 0.053 & 0.048 & 0.058 & 0.046  &  & 0.074  & 0.078  & 0.086 & 0.087  \\
                     & 300 & 0.050 & 0.056 & 0.056 & 0.053  &  & 0.053 & 0.052 & 0.055 & 0.044  &  & 0.063  & 0.075  & 0.073 & 0.074  \\ \hline
\end{tabular}
\end{table}

\begin{table}[]
\centering
\caption{Power of the likelihood ratio tests for exponential distribution}
\label{my-label}
\begin{tabular}{ccccccccccc} \hline
                     &     & \multicolumn{4}{c}{$H_1:$ $X\sim$ Exp(0.9)} &  & \multicolumn{4}{c}{$H_1:$ $X\sim$Exp(0.8)} \\ \cline{2-6} \cline{8-11}
                     & $n$   & DNEL       & VLX     & JEL     & AJEL   &  & DNEL       & VLX     & JEL     & AJEL   \\ \cline{3-6} \cline{8-11}
\multirow{5}{*}{$r$=1} & 25  & 0.088    & 0.099   & 0.126   & 0.041    &  & 0.147    & 0.215   & 0.280   & 0.018    \\
                     & 50  & 0.097    & 0.124   & 0.152   & 0.028    &  & 0.233    & 0.359   & 0.409   & 0.092    \\	
                     & 100 & 0.124    & 0.184   & 0.210   & 0.069    &  & 0.386    & 0.608   & 0.642   & 0.386    \\
                     & 200 & 0.202    & 0.328   & 0.353   & 0.204    &  & 0.643    & 0.868   & 0.881   & 0.784    \\
                     & 300 & 0.275    & 0.440   & 0.462   & 0.337    &  & 0.808    & 0.965   & 0.968   & 0.940    \\
                     &     &          &         &         &          &  &          &         &         &          \\
\multirow{5}{*}{$r$=2} & 25  & 0.092    & 0.072   & 0.152   & 0.047    &  & 0.131    & 0.161   & 0.301   & 0.023    \\
                     & 50  & 0.091    & 0.127   & 0.180   & 0.033    &  & 0.178    & 0.340   & 0.449   & 0.100    \\
                     & 100 & 0.107    & 0.190   & 0.235   & 0.080    &  & 0.304    & 0.600   & 0.670   & 0.407    \\
                     & 200 & 0.159    & 0.307   & 0.347   & 0.206    &  & 0.512    & 0.866   & 0.885   & 0.791    \\
                     & 300 & 0.215    & 0.443   & 0.476   & 0.343    &  & 0.674    & 0.965   & 0.971   & 0.942    \\
                     &     &          &         &         &          &  &          &         &         &          \\
\multirow{5}{*}{$r$=3} & 25  & 0.103    & 0.152   & 0.186   & 0.057    &  & 0.123    & 0.267   & 0.328   & 0.027    \\
                     & 50  & 0.082    & 0.171   & 0.176   & 0.035    &  & 0.153    & 0.409   & 0.442   & 0.100    \\
                     & 100 & 0.095    & 0.204   & 0.226   & 0.081    &  & 0.235    & 0.611   & 0.645   & 0.383    \\
                     & 200 & 0.128    & 0.300   & 0.333   & 0.201    &  & 0.407    & 0.864   & 0.885   & 0.780    \\
                     & 300 & 0.175    & 0.429   & 0.461   & 0.338    &  & 0.553    & 0.958   & 0.965   & 0.936    \\
\multicolumn{1}{l}{} &     &          &         &         &          &  &          &         &         &          \\
\multirow{5}{*}{r=4} & 25  & 0.102    & 0.178   & 0.188   & 0.061    &  & 0.121    & 0.351   & 0.353   & 0.030    \\
                     & 50  & 0.088    & 0.233   & 0.295   & 0.042    &  & 0.131    & 0.453   & 0.455   & 0.106    \\
                     & 100 & 0.087    & 0.238   & 0.322   & 0.076    &  & 0.202    & 0.626   & 0.652   & 0.389    \\
                     & 200 & 0.123    & 0.354   & 0.386   & 0.215    &  & 0.344    & 0.856   & 0.879   & 0.776    \\
                     & 300 & 0.156    & 0.440   & 0.462   & 0.337    &  & 0.472    & 0.949   & 0.959   & 0.924    \\ \hline
\end{tabular}
\end{table}

\begin{table}[]
\centering
\caption{Power of the likelihood ratio tests for Normal distribution}
\label{}
\begin{tabular}{ccccccccccc} \hline
                     &     & \multicolumn{4}{c}{$H_1: X\sim$ Normal(0,$2^2$)} &  & \multicolumn{4}{c}{$H_1: X\sim$ Normal(0,$3^2$)} \\ \cline{2-6} \cline{8-11}
                     & $n$   & DNEL      & VLX     & JEL     & AJEL   &  & DNEL      & VLX     & JEL   & AJEL  \\ \cline{3-6} \cline{8-11}
\multirow{5}{*}{$r$=1} & 25 & 	0.291 &	0.261&	0.380&	0.371&&		0.509&	0.434&	0.554&	0.421 \\
					 & 50 & 	0.495 &	0.462&	0.561&	0.545&&		0.781&	0.717&	0.783&	0.725 \\
					 & 100&	0.782 &	0.764&	0.829&	0.808&&		0.972&	0.952&	0.965&	0.955  \\
					 & 200&	0.970 &	0.965&	0.984&	0.973&&		1.000&	0.999&	1.000&	1.000  \\
					 & 300&	0.997 &	0.997&	0.998&	0.997&&		1.000&	1.000&	1.000&	1.000  \\
					 &     &         &         &         &          &  &         &         &        &         \\
\multirow{5}{*}{$r$=2} &25  & 0.403&	0.523&	0.648&	0.645&&		0.680&	0.692&	0.838&	0.690 \\
					 &50  & 0.649&	0.700&	0.854&	0.847&&		0.921&	0.944&	0.974&	0.955\\
					 &100 & 0.898&	0.948&	0.989&	0.978&&		0.998&	1.000&	1.000&	1.000\\
					 &200 & 0.994&	1.000&	1.000&	1.000&&		1.000&	1.000&	1.000&	1.000\\	
					 &300 & 1.000&	1.000&	1.000&	1.000&&		1.000&	1.000&	1.000&	1.000\\
                     &     &         &         &         &          &  &         &         &        &         \\

\multirow{5}{*}{$r$=3} &25& 	0.421&	0.479&	0.686&	0.683&&		0.684&	0.810&	0.932&	0.805\\
					&50&	0.641&	0.861&	0.931&	0.927&&		0.914&	0.985&	0.995&	0.989\\
					&100&	0.889&	0.991&	0.998&	0.996&&		0.996&	1.000&	1.000&	1.000\\
					&200&	0.993&	1.000&	1.000&	1.000&&		1.000&	1.000&	1.000&	1.000\\
					&300&	1.000&	1.000&	1.000&	1.000&&		1.000&	1.000&	1.000&	1.000\\
                     &     &         &         &         &          &  &         &         &        &         \\
\multirow{5}{*}{$r$=4} &25&	0.392&	0.636&	0.794&	0.756&&		0.653&	0.911&	0.960&	0.860\\
					&50&	0.602&	0.941&	0.997&	0.965&&		0.892&	0.996&	0.999&	0.996\\
					&100&	0.862&	0.998&	1.000&	0.999&&		0.992&	1.000&	1.000&	1.000\\
					&200&	0.990&	1.000&	1.000&	1.000&&		0.993&	1.000&	1.000&	1.000\\
					&300&	0.996&	1.000&	1.000&	1.000&&		1.000&	1.000&	1.000&	1.000\\ \hline
\end{tabular}
\end{table}

\begin{table}[]
\centering
\caption{Power of the likelihood ratio tests for log normal distribution}
\label{}
\begin{tabular}{ccccccccccc} \hline
                     &     & \multicolumn{4}{c}{$H_1: X\sim$ Lognormal(0,1.5)} &  & \multicolumn{4}{c}{$H_1: X\sim$ Lognormal(0,2)} \\ \cline{2-6} \cline{8-11}
                     & $n$   & DNEL      & VLX     & JEL     & AJEL   &  & DNEL      & VLX     & JEL    & AJEL  \\ \cline{3-6} \cline{8-11}
\multirow{5}{*}{$r$=1} & 25  & 0.378   & 0.458   & 0.496   & 0.159    &  & 0.742   & 0.822   & 0.842  & 0.553   \\
                     & 50  & 0.592   & 0.704   & 0.730   & 0.525    &  & 0.940   & 0.972   & 0.974  & 0.942   \\
                     & 100 & 0.856   & 0.928   & 0.935   & 0.885    &  & 0.997   & 0.999   & 0.999  & 0.999   \\
                     & 200 & 0.984   & 0.996   & 0.997   & 0.994    &  & 1.000   & 1.000   & 1.000  & 1.000   \\
                     & 300 & 0.999   & 1.000   & 1.000   & 1.000    &  & 1.000   & 1.000   & 1.000  & 1.000   \\
                     &     &         &         &         &          &  &         &         &        &         \\
\multirow{5}{*}{$r$=2} & 25  & 0.359   & 0.409   & 0.553   & 0.198    &  & 0.701   & 0.679   & 0.866  & 0.592   \\
                     & 50  & 0.552   & 0.702   & 0.766   & 0.580    &  & 0.916   & 0.872   & 0.979  & 0.957   \\
                     & 100 & 0.828   & 0.940   & 0.949   & 0.913    &  & 0.997   & 0.940   & 1.000  & 0.999   \\
                     & 200 & 0.979   & 0.997   & 0.998   & 0.996    &  & 1.000   & 0.972   & 1.000  & 1.000   \\
                     & 300 & 0.998   & 1.000   & 1.000   & 1.000    &  & 1.000   & 0.984   & 1.000  & 1.000   \\
                     &     &         &         &         &          &  &         &         &        &         \\
\multirow{5}{*}{$r$=3} & 25  & 0.341   & 0.432   & 0.580   & 0.228    &  & 0.696   & 0.507   & 0.894  & 0.649   \\
                     & 50  & 0.532   & 0.721   & 0.798   & 0.622    &  & 0.912   & 0.630   & 0.986  & 0.968   \\
                     & 100 & 0.794   & 0.932   & 0.955   & 0.924    &  & 0.994   & 0.715   & 1.000  & 1.000   \\
                     & 200 & 0.969   & 0.994   & 0.999   & 0.998    &  & 1.000   & 0.806   & 1.000  & 1.000   \\
                     & 300 & 0.996   & 0.998   & 1.000   & 1.000    &  & 1.000   & 0.853   & 1.000  & 1.000   \\
                     &     &         &         &         &          &  &         &         &        &         \\
\multirow{5}{*}{$r$=4} & 25  & 0.335   & 0.530   & 0.606   & 0.246    &  & 0.663   & 0.503   & 0.908  & 0.661   \\
                     & 50  & 0.505   & 0.780   & 0.806   & 0.644    &  & 0.884   & 0.639   & 0.990  & 0.972   \\
                     & 100 & 0.759   & 0.943   & 0.962   & 0.931    &  & 0.990   & 0.739   & 1.000  & 1.000   \\
                     & 200 & 0.956   & 0.995   & 1.000   & 0.999    &  & 1.000   & 0.823   & 1.000  & 1.000   \\
                     & 300 & 0.973   & 0.998   & 1.000   & 1.000    &  & 1.000   & 0.962   & 1.000  & 1.000 \\ \hline
\end{tabular}
\end{table}
\end{landscape}
\end{singlespace}

\section{Application}
\par Modelling rain fall frequency values is an important area of research for  planning various  projects such as construction of bridges and spillways of dams. Many probability distributions are used for this modelling purpose and this includes using method of PWM for the estimation of the parameters of interest. Here we look into  95 \% interval estimates of $\beta_r$ for rain fall data set of Indian States.
\par Government of India  releases data set related to various sectors and the same can be accessible through the web page \url{www.data.gov.in}. These sectors contain agriculture, education, health and family welfare, labour and employment, travel and tourism etc. We use weighted average of monthly rain fall (in mm) data for the whole country starting from 1901 to 2014 released by Meteorological Department, Ministry of Earth Sciences, India. This data set is based on more than 2000 rain guage readings spread over the entire country. During the span of 1901 to 2014, rain fall reached maximum for the year 1917 and  recorded  minimum frequency for the year 1972. India had witnessed severe drought in the year 1972-73 in drought prone areas especially large parts of Maharashtra, a state of India.
\par Table 8 gives 95 \% interval estimates of $\beta_1$ and corresponding average lengths for the rain fall data. From the data, it is clear that for all most all months JEL interval has shorter average length.  Also performances of JEL interval is comparable with VLX interval except for the month September where VLX interval is less stable. For AJEL based approach, the intervals are more wide for the months June to September. 
\begin{singlespace}
\begin{landscape}
\begin{table}[]
\centering
\caption{ 95\% confidence interval estimators of $\beta_1$ and average lengths for rainfall data}
\label{my-label}
\begin{tabular}{|l|l|l|l|l|}
\hline
Month     & DNEL                                                                                       & VLX                                                                            & JEL                                                                             & AJEL                                                                               \\ \hline
January   & \multicolumn{1}{c|}{\begin{tabular}[c]{@{}c@{}}(10.2040, 14.3500)\\ 4.1462\end{tabular}} & \begin{tabular}[c]{@{}c@{}}(11.0018,13.3237) \\           2.3219\end{tabular}  & \begin{tabular}[c]{@{}c@{}}(11.0817, 13.3838) \\            {2.3021} \end{tabular} & \begin{tabular}[c]{@{}c@{}}(10.2383, 13.0992) \\            2.8609\end{tabular}      \\ \hline
February  & \begin{tabular}[c]{@{}c@{}}(12.4763, 17.2933)\\          4.8170\end{tabular}             & \begin{tabular}[c]{@{}c@{}}(13.4675, 15.9107) \\          2.4433\end{tabular}  & \begin{tabular}[c]{@{}c@{}}(13.5546, 15.9952) \\         {2.4406} \end{tabular}    & \begin{tabular}[c]{@{}c@{}}(12.5195, 15.6855)\\               3.1661\end{tabular}    \\ \hline
March     & \begin{tabular}[c]{@{}c@{}}(14.5653, 20.1041)\\          5.5387\end{tabular}             & \begin{tabular}[c]{@{}c@{}}(15.7437, 18.5202)\\         2.7765\end{tabular}    & \begin{tabular}[c]{@{}c@{}}(15.8361, 18.6097) \\          {2.7736} \end{tabular}   & \begin{tabular}[c]{@{}c@{}}(14.5958, 18.2250)\\                3.6291\end{tabular}   \\ \hline
April     & \begin{tabular}[c]{@{}c@{}}(18.9144, 24.9451)\\         6.0308\end{tabular}              & \begin{tabular}[c]{@{}c@{}}(20.5914, 22.8184) \\        2.2270\end{tabular}    & \begin{tabular}[c]{@{}c@{}}(20.6697, 22.8892) \\            {2.2195} \end{tabular} & \begin{tabular}[c]{@{}c@{}}(18.7482, 22.4825)\\                3.7343\end{tabular}   \\ \hline
May       & \begin{tabular}[c]{@{}c@{}}(31.1654, 41.1201) \\         9.9546\end{tabular}             & \begin{tabular}[c]{@{}c@{}}(33.9171, 37.6333) \\        {3.7192} \end{tabular}    & \begin{tabular}[c]{@{}c@{}}(34.0352, 37.7536) \\           3.7184\end{tabular}  & \begin{tabular}[c]{@{}c@{}}(30.8670, 37.1203) \\              6.2533\end{tabular}    \\ \hline
June      & \begin{tabular}[c]{@{}c@{}}(82.7641, 107.3357)\\         24.5717\end{tabular}            & \begin{tabular}[c]{@{}c@{}}(90.5382, 97.7330)\\         7.1948\end{tabular}    & \begin{tabular}[c]{@{}c@{}}(90.8288, 97.9655)\\            {7.1367} \end{tabular}  & \begin{tabular}[c]{@{}c@{}}(81.7213, 96.3148) \\            14.5935\end{tabular}     \\ \hline
July      & \begin{tabular}[c]{@{}c@{}}(138.682, 175.931)\\          37.2482\end{tabular}            & \begin{tabular}[c]{@{}c@{}}(152.4140, 159.2123)\\          6.7983\end{tabular} & \begin{tabular}[c]{@{}c@{}}(152.696, 159.460)\\            {6.7641} \end{tabular}  & \begin{tabular}[c]{@{}c@{}}(135.8374, 156.8447) \\            21.0063\end{tabular}   \\ \hline
August    & \begin{tabular}[c]{@{}c@{}}(122.7836, 156.2862) \\         33.5026\end{tabular}          & \begin{tabular}[c]{@{}c@{}}(134.8720, 141.3698) \\         6.4978\end{tabular} & \begin{tabular}[c]{@{}c@{}}(135.1352, 141.6145)\\           {6.4793}\end{tabular} & \begin{tabular}[c]{@{}c@{}}(120.4020, 139.2415) \\              18.8395\end{tabular} \\ \hline
September & \begin{tabular}[c]{@{}c@{}}(84.7445, 109.9898)\\         25.2453\end{tabular}            & \begin{tabular}[c]{@{}c@{}}(92.6231, 108.6973) \\        {16.0743} \end{tabular} & \begin{tabular}[c]{@{}c@{}}(92.9210, 100.4291) \\          {7.5081} \end{tabular}  & \begin{tabular}[c]{@{}c@{}}(83.6334, 98.5444) \\             14.9110\end{tabular}    \\ \hline
October   & \begin{tabular}[c]{@{}c@{}}(39.6969, 53.8395) \\        14.1426\end{tabular}             & \begin{tabular}[c]{@{}c@{}}(42.9838, 49.4825)\\         6.4987\end{tabular}    & \begin{tabular}[c]{@{}c@{}}(43.20343, 49.6888) \\         {6.4854} \end{tabular}   & \begin{tabular}[c]{@{}c@{}}(39.58949, 48.73922)\\              9.1497\end{tabular}   \\ \hline
November  & \begin{tabular}[c]{@{}c@{}}(16.3331, 23.2432)\\          6.9101\end{tabular}             & \begin{tabular}[c]{@{}c@{}}(17.5455, 21.4974)\\           {3.9620} \end{tabular}  & \begin{tabular}[c]{@{}c@{}}(17.6720, 21.6304)\\          3.9584\end{tabular}    & \begin{tabular}[c]{@{}c@{}}(16.4306, 21.1816) \\            4.7510\end{tabular}      \\ \hline
December  & \begin{tabular}[c]{@{}c@{}}(8.2039, 11.9599) \\          3.7561\end{tabular}             & \begin{tabular}[c]{@{}c@{}}(8.776057, 11.1920)\\           2.4160\end{tabular} & \begin{tabular}[c]{@{}c@{}}(8.8470, 11.2497) \\           2.4028 \end{tabular}   & \begin{tabular}[c]{@{}c@{}}(8.21451, 11.0223) \\             2.8078\end{tabular}      \\ \hline
\end{tabular}
\end{table}
\end{landscape}
\end{singlespace}

\section{Conclusions}

 Probability weighted moments  generalize the concept of moments of a probability distribution and are used to estimates parameters involved in model extremes of the natural phenomena. In this article we proposed JEL and AJEL based inference for $\beta_r$. We showed that the log JEL and AJEL ratio statistics is asymptotically distributed as chi-square distribution with one degree of freedom and constructed JEL and AJEL based confidence interval for $\beta_r$.  We also developed JEL and AJEL based tests for testing the hypothesis $\beta_r=\beta_r^0$, where $\beta_r^0$ is a specific value of $\beta_r$. Simulation results shows that proposed JEL and AJEL test perform better than other recently developed tests in terms of empirical power. We also showed that the JEL based confidence interval has minimum length compared to competitor and has coverage probability of desired confidence level.  The method is illustrated by using a rainfall data of Indian states.

\end{document}